\documentclass[10pt,a4paper]{article}
\usepackage{fullpage}
\usepackage{amsfonts,amsmath,amssymb}
\usepackage{amsthm}
\usepackage{graphicx}\usepackage{sectsty}

\theoremstyle{plain}
\newtheorem{theorem}{Theorem}[section]
\newtheorem{lemma}[theorem]{Lemma}

\theoremstyle{definition}
\newtheorem{definition}[theorem]{Definition}
\theoremstyle{remark}
\newtheorem{remark}[theorem]{Remark}
\numberwithin{equation}{section}

\sectionfont{\large}

\begin{document}
\title{A Fixed Energy Fixed Angle Inverse Uniqueness in  Interior Transmission
Problem}
\author{Lung-Hui Chen$^1$}\maketitle\footnotetext[1]{Department of
Mathematics, National Chung Cheng University, 168 University Rd.
Min-Hsiung, Chia-Yi County 621, Taiwan. Email:
mr.lunghuichen@gmail.com;\,lhchen@math.ccu.edu.tw. Fax:
886-5-2720497.}
\begin{abstract}
We transform an inverse scattering problem to be an interior transmission problem.
We find an inverse uniqueness on the scatterer with a knowledge of a fixed interior transmission eigenvalue.
By examining the solution in a series of spherical harmonics at far fields, we can decide the perturbation uniquely for the radially symmetric perturbations.
\\MSC: 35P25/35R30/34B24.
\\Keywords: inverse problem/ inverse scattering/interior transmission eigenvalue/entire function theory/Rellich's lemma/Sturm-Liouville problem.
\end{abstract}

\section{Introduction}
In this work we study the inverse acoustic scattering problem of recovering the index of refraction in an inhomogeneous domain. To determine the inhomogeneity, we send a wave field into the domain. The propagation of the detecting wave field will be perturbed when hinging on the inhomogeneity defined by the index of refraction that in turn produces a scattered wave field. The inverse problem is to determine the index of refraction by the measurement of the scattered wave field. The study of inverse scattering problem is the core in various disciplines of science and technology such as sonar and radar, geophysical sciences, medical imaging, remote sensing, and non-destructive testing in instrument manufacturing.

\par
In this paper, we take the incident wave field to be the time harmonic  acoustic plane wave of the form $$u^i(x):=e^{ikx\cdot d},$$ $k\in\mathbb{R}^+$, $x\in\mathbb{R}^3$, and $d\in\mathbb{S}^2$ is the impinging direction.  The inhomogeneity is defined by the index of refraction $n\in\mathcal{C}^2(\mathbb{R}^3)$; $n(x)=n(|x|)>0$ and $n(|x|)-1\neq0$, for $x\in \Omega$, a ball of radius $R$ in $\mathbb{R}^3$.  The wave propagation is governed by the following equation.
\begin{eqnarray}\label{1.1}
\left\{%
\begin{array}{ll}
\Delta u(x)+k^2n(|x|)u(x)=0,\,x\in\mathbb{R}^3;\vspace{4pt}\\\vspace{3pt}
u(x)=u^i(x)+u^s(x),\,x\in\mathbb{R}^3\setminus \Omega; \\
\lim_{|x|\rightarrow\infty}|x|\{\frac{\partial u^s(x)}{\partial |x|}-iku^s(x)\}=0.
\end{array}%
\right.
\end{eqnarray}
Particularly, we have the following asymptotic expansion on the scattered wave field \cite{Colton2, Isakov}.
\begin{equation}\label{1.2}
u^s(x)=\frac{e^{ik|x|}}{|x|}u_\infty(\hat{x};d,k)+O(\frac{1}{|x|^{\frac{3}{2}}}),\,|x|\rightarrow\infty,
\end{equation}
which holds uniformly for all $\hat{x}:=\frac{x}{|x|}$, $x\in\mathbb{R}^3$, and $u_\infty(\hat{x};d,k)$ is known as the scattering amplitude in the literature \cite{Colton2,Hu,Isakov,Kirsch86,S}.
We state the following inverse uniqueness result in this paper.
\begin{theorem}\label{11}
Let $u_\infty^j(\hat{x};d,k)$, $j=1,2$, be the scattering amplitude parametered by  the non-constant index of refraction $n^j\in\mathcal{C}^2(\mathbb{R}^3)$ in~(\ref{1.1}).
If  $u_\infty^1(\hat{x};d,k)=u_\infty^2(\hat{x};d,k)$ for all $\hat{x}\in\mathbb{S}^2$ with a fixed $d\in\mathbb{S}^2$ and a fixed $k\geq 1$, then $n^1\equiv n^2$.
\end{theorem}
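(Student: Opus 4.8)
The plan is to exploit the radial symmetry to reduce the inverse scattering statement to a one-dimensional inverse spectral problem and to close it with entire-function theory. First I would apply Rellich's lemma: since $u_\infty^1(\hat{x};d,k)=u_\infty^2(\hat{x};d,k)$ for all $\hat{x}\in\mathbb{S}^2$ and both scattered fields solve the Helmholtz equation with the radiation condition in the common exterior $\mathbb{R}^3\setminus\overline{\Omega}$ (where $n^1=n^2=1$), the two scattered fields coincide there, $u^{s,1}\equiv u^{s,2}$. As $u^i$ is common, the total fields agree outside $\Omega$, so by continuity of the Cauchy data across $\partial\Omega=\{|x|=R\}$ we get $u^1=u^2$ and $\partial_\nu u^1=\partial_\nu u^2$ on $\partial\Omega$. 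Using the Jacobi--Anger expansion $e^{ikx\cdot d}=\sum_{l\ge0}(2l+1)i^l j_l(k|x|)P_l(\hat{x}\cdot d)$ and the radial symmetry of $n^j$, I would then separate variables: the $l$-th exterior mode is $(2l+1)i^l[j_l(kr)+a_l^j h_l^{(1)}(kr)]$, while inside $\Omega$ it is a multiple of the solution $w_l^j(r)$ of $w_l''+\tfrac{2}{r}w_l'+(k^2 n^j(r)-\tfrac{l(l+1)}{r^2})w_l=0$ regular at the origin. Orthogonality of the $P_l$ turns the far-field identity into $a_l^1=a_l^2$ for every $l$, and matching $u,\partial_r u$ at $r=R$ shows the interior logarithmic derivative $w_l^{j\prime}(R)/w_l^j(R)$ is independent of $j$ for each $l$.

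This last identity is exactly the mode-$l$ interior transmission condition for the pair $(n^1,n^2)$: for every partial wave the fixed $k$ is a transmission eigenvalue, with eigenfunctions $w_l^1,w_l^2$ sharing Cauchy data at $R$. Setting $y_l^j(r)=r\,w_l^j(r)$ recasts the radial equation in Schr\"odinger form $-y_l''+\tfrac{l(l+1)}{r^2}y_l=k^2 n^j(r)\,y_l$, a Sturm--Liouville problem on $(0,R)$ at the \emph{fixed} energy $k^2$. I would encode the boundary data through the Weyl-type function $m^j(\lambda)=y_l^{j\prime}(R)/y_l^j(R)$ with $\lambda=l(l+1)$; the transmission eigenvalues for each mode are the zeros of a Wronskian characteristic function $D_l^j$ that is entire in its spectral parameter, and the preceding reduction yields $m^1(\lambda)=m^2(\lambda)$ at every $\lambda=l(l+1)$, $l=0,1,2,\dots$.

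Since the regular radial solution is entire in the parameter $\lambda$, the map $\lambda\mapsto m^j(\lambda)$ extends to a meromorphic function of finite order whose poles are the Regge-type transmission eigenvalues admitting a Hadamard factorization. The two functions $m^1,m^2$ agree on the infinite set $\{l(l+1):l\ge0\}$, which accumulates at infinity with the density prescribed by the eigenvalue asymptotics; together with the growth control this forces $m^1\equiv m^2$ as meromorphic functions of $\lambda$. A fixed-energy inverse Sturm--Liouville argument---recovering the radial coefficient from the Weyl function taken in the angular-momentum variable---then gives $n^1(r)=n^2(r)$ on $(0,R)$, and since $n^j\equiv1$ outside $\Omega$ we conclude $n^1\equiv n^2$.

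The decisive difficulty is this last step. At a \emph{single} energy $k^2$ an inverse Sturm--Liouville problem is in general not uniquely solvable, so mode-by-mode information cannot suffice on its own; the whole point is to package the data for all $l$ into one entire object and let complex-analytic rigidity do the work. The technical heart is therefore to establish the precise order and type of the entire characteristic function in the $\lambda$ (equivalently $\nu=l+\tfrac12$) variable, together with the asymptotic distribution of the transmission eigenvalues, so that the Hadamard factorization is determinate and the agreement of $m^1,m^2$ on the discrete set $\{l(l+1)\}$ genuinely upgrades to identity. The hypotheses $n^j\in\mathcal{C}^2$, $n^j(|x|)-1\neq0$ and $k\ge1$ should be precisely what guarantees the requisite nondegeneracy and growth estimates.
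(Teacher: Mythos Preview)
Your reduction via Rellich's lemma and separation of variables is the same as the paper's, and you arrive at the same mode-by-mode interior transmission condition: for the fixed $k$ of the hypothesis, the Cauchy data of the regular radial solutions for $n^1$ and $n^2$ agree at $r=R$ for every $l\ge0$. From there, however, your route and the paper's diverge substantially.

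You stay at the fixed energy $k$ throughout, package the boundary data into the Weyl function $m^j(\lambda)$ in the angular-momentum variable $\lambda=l(l+1)$, and aim to (i) upgrade agreement on the integer lattice $\{l(l+1)\}$ to $m^1\equiv m^2$ via growth bounds and Hadamard factorization, and (ii) invoke a fixed-energy (Regge-type) inverse Sturm--Liouville theorem to recover $n$. The paper does something quite different: it first lets $l\to\infty$ at the given $k$ and uses Bessel-function asymptotics (together with a Wronskian identity for $j_n$) to conclude $B^1=B^2$, where $B^j=\int_0^R\sqrt{n^j}$. Only after this does it \emph{abandon} the fixed energy: it fixes $l=0$, treats $k'\in\mathbb{C}$ as the complex variable, shows the ratio $F(k')=z_0^1(B^1;k')/z_0^2(B^1;k')$ tends to $1$ at infinity (hence is rational), observes that $F$ equals $1$ on the infinite set of $l=0$ transmission eigenvalues, and concludes $F\equiv1$. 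The same argument for the derivative gives coincidence of the Neumann data, and then a two-spectra Borg--Levinson theorem for Bessel operators (Carlson) yields $n^1\equiv n^2$.

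What each approach buys: the paper's switch to the $k$-variable reduces the problem to classical one-dimensional inverse spectral theory (two full spectra determine the potential), for which off-the-shelf theorems are available; the price is the preliminary step $B^1=B^2$ and the need to control $F(k')$ near the zeros of $\sin(k'B^1)$, which the paper handles using the $\mathcal{C}^2$ hypothesis and the non-constancy of $n^j$ (this is exactly where those assumptions enter). Your approach is conceptually cleaner---never leave the fixed energy---but it leans on two nontrivial facts you have only asserted: that the half-integer lattice in $\nu=l+\tfrac12$ is a uniqueness set for the relevant class of meromorphic functions (the exponential type in $\nu$ is governed by $\int_0^R$-type quantities and need not be below $\pi$), and that the full $m(\lambda)$ at a single energy determines $n$. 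Both are true in suitable settings (Ramm, Horv\'ath, and others), but verifying their hypotheses here is the real work, and neither the order/type computation nor the fixed-energy uniqueness theorem is supplied. So your plan is a legitimate alternative strategy, but the ``technical heart'' you flag is not a detail---it is the proof, and the paper circumvents it entirely by trading the angular-momentum variable for the spectral variable $k$.
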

Let ${u^s}^j(x)$ be the scattered wave field parametered by $n^j$.
From Rellich's lemma in scattering theory \cite{Colton2,Hu,S},  the Theorem \ref{11} assumption and~(\ref{1.2}) imply
\begin{equation}\label{1.3}
{u^s}^1(x)={u^s}^2(x),\,x\in\mathbb{R}^3\setminus \Omega.
\end{equation}
Most importantly,~(\ref{1.1}),~(\ref{1.2}), and~(\ref{1.3}) imply the following interior transmission problem. Let us set $$w(x):={u^s}^1(x),\,v(x):={u^s}^2(x),$$ and thus
 we have the following system of equations.
\begin{eqnarray}\label{1.4}
\left\{%
\begin{array}{ll}
    \Delta w+k^2n^1w=0,  & \hbox{ in }\Omega; \vspace{3pt}\\\vspace{3pt}
    \Delta v+k^2n^2v=0, & \hbox{ in }\Omega; \\\vspace{3pt}
    w=v, & \hbox{ on }\partial \Omega; \\\vspace{3pt}
    \frac{\partial w}{\partial \nu}=\frac{\partial v}{\partial \nu},& \hbox{ on }\partial \Omega,\\
\end{array}%
\right.
\end{eqnarray}
where $\nu$ is the unit outer normal. The equation~(\ref{1.4}) is called the homogeneous interior transmission eigenvalue problem \cite{Aktosun,Cakoni,Cakoni2,Chen,Chen7,Colton,Colton2,Hu,Kirsch86,La,Mc,R,Sun2}.
The problem~(\ref{1.4}) occurs naturally when one considers the scattering of the plane waves by certain inhomogeneity inside the domain $\Omega$, defined by an index of refraction in many models.

\par
Now we expand the solution $(w,v)$ of~(\ref{1.4}) individually in two series of spherical harmonics by Rellich's lemma \cite[p.\,32]{Colton2}:
\begin{eqnarray}
&&w(x;k)=\frac{1}{r}\sum_{l=0}^{\infty}\sum_{m=-l}^{m=l}\alpha_{l,m}a_l(r;k)Y_l^m(\hat{x});\label{15}\\
&&v(x;k)=\frac{1}{r}\sum_{l=0}^{\infty}\sum_{m=-l}^{m=l}\beta_{l,m}b_l(r;k)Y_l^m(\hat{x}), \label{16}
\end{eqnarray}
where $r:=|x|$;  $\hat{x}=(\theta,\varphi)\in\mathbb{S}^2$, $k\in\mathbb{C}$.
The summations converge uniformly and absolutely on suitable compact subsets in $|r|\geq R_0$, with some sufficiently large $R_0\geq R$.
The expansion holds uniquely in the exterior domain of Helmholtz equation.
The spherical harmonics
\begin{equation}\label{Y}
\{Y_l^m(\theta,\varphi)\}_{l,m}:=\{\sqrt{\frac{2l+1}{4\pi}\frac{(l-|m|)!}{(l+|m|)!}}P_l^{|m|}(\cos\theta)e^{im\varphi}\}_{l,m},
\,m=-l,\ldots,l;\,l=0,1,2,\ldots,
\end{equation}
form a complete orthonormal system in $L^2(\mathbb{S}^2)$, in which
\begin{equation}\nonumber
P_n^m(t):=(1-t^2)^{m/2}\frac{d^mP_n(t)}{dt^m},\,m=0,1,\ldots,n,
\end{equation}
where the Legendre polynomials $P_n$, $n=0,1,\ldots,$ form a complete orthogonal system in $L^2[-1,1]$. By the analytic continuation of Helmholtz equation, the expansions~(\ref{15}) and~(\ref{16}) converge up to the boundary $\partial \Omega$, that is, $|x|=R$.

The orthogonality of the spherical harmonics \cite[p.\,227]{Colton2} implies that the functions in the form,
\begin{eqnarray}\label{17}
\left\{\begin{array}{ll}
&a_{l,m}(x;k):=\alpha_{l,m}\frac{a_l(r;k)}{r}Y_l^m(\hat{x}); \vspace{4pt}\\
&b_{l,m}(x;k):=\beta_{l,m}\frac{b_l(r;k)}{r}Y_l^m(\hat{x}),
  \end{array}
\right.
\end{eqnarray}
satisfy the first two equations in~(\ref{1.4}) independently for each $(l,m)$ for $|x|\geq R$. To fulfill the boundary condition of~(\ref{1.4}), we look for any $k\in\mathbb{C}$ such that
\begin{eqnarray}\label{18}
\left\{
  \begin{array}{ll}
    &\alpha_{l,m}a_l(r;k)|_{r=R}=\beta_{l,m}b_l(r;k)|_{r=R}; \vspace{6pt}\\
    &\alpha_{l,m}\partial_r\frac{a_l(r;k)}{r}|_{r=R}=\beta_{l,m}\partial_r\frac{b_l(r;k)}{r}|_{r=R}.
  \end{array}
\right.
\end{eqnarray}
By linear algebra, the existences of  $\alpha_{l,m}$ and $\beta_{l,m}$ in~(\ref{18}) are equivalent to finding the zeros of the following functional determinant:
\begin{eqnarray}\nonumber
D_{l}(k;R):=\det\left(%
\begin{array}{cc}
 a_l(r;k)|_{r=R}  & b_l(r;k)|_{r=R}\vspace{3pt}\\
  \{\frac{a_l(r;k)}{r}\}'|_{r=R}& \{\frac{b_l(r;k)}{r}\}'|_{r=R}
\end{array}%
\right)\label{110},
\end{eqnarray}
that is,
\begin{eqnarray}\label{DD}
D_l(k;R)
=\frac{a_l(R;k)b_l'(R;k)-a_l'(R;k)b_l(R;k)}{R}.
\end{eqnarray}

\par
Due to the radially symmetric assumption on~(\ref{1.1}), the Fourier coefficients $a_l(r;k)$ and $b_l(r;k)$  solve the following system of ODE for all $l\in\mathbb{N}_0$:
\begin{eqnarray}\label{111}
\left\{
  \begin{array}{ll}
    a_l''(r;k)+(k^2n^1(r\hat{x})-\frac{l(l+1)}{r^2})a_l(r;k)=0,\,0<r<\infty;\vspace{3pt}\\
    b_l''(r;k)+(k^2n^2(r\hat{x})-\frac{l(l+1)}{r^2})b_l(r;k)=0,\,0<r<\infty;\vspace{3pt}\\
    D_{l}(k;R)=0.
  \end{array}
\right.
\end{eqnarray}
By the uniqueness of the Sommerfeld radiation condition of~(\ref{1.1}), we deduce that
\begin{equation}\label{Q}
\alpha_{l,m}=\beta_{l,m}=1.
\end{equation}
We also set the initial conditions of $a_l(r;k)$ and $b_l(r;k)$ at $r=0$ to be the following conditions.
\begin{eqnarray}\label{113}
&&\lim_{r\rightarrow0}\{\frac{a_l(r;k)}{r}-j_l(k r)\}=0;\\
&&\lim_{r\rightarrow0}\{\frac{b_l(r;k)}{r}-j_l(k r)\}=0.\label{1.14}
\end{eqnarray}
The behavior of the Bessel function $j_l(kr)$ near $r=0$ is found in \cite[p.\,437]{Ab}. We refer the initial condition~(\ref{113}) and~(\ref{1.14}) to \cite{Mc}.
Considering~(\ref{113}) and~(\ref{1.14}) , we deduce~(\ref{111}) to be the following ODE.
\begin{eqnarray}\label{114}
\left\{
  \begin{array}{ll}
    a_l''(r;k)+(k^2n^1(r)-\frac{l(l+1)}{r^2})a_l(r;k)=0,\,0<r<\infty;\vspace{3pt}\\
    b_l''(r;k)+(k^2n^2(r)-\frac{l(l+1)}{r^2})b_l(r;k)=0,\,0<r<\infty;\vspace{3pt}\\
    D_{l}(k;0)=0;\vspace{3pt}\\
    D_{l}(k;R)=0.
  \end{array}
\right.
\end{eqnarray}
This is the reduced inverse problem~(\ref{1.1}) in radially symmetric form. Because $\{Y_l^m(\theta,\varphi)\}_{l,m}$ is complete in $L^2(\mathbb{S}^2)$, one element of the basis can be replaced by an element of fractional order of $(l,m)$. Hence, the property~(\ref{Y}) holds for $l\geq0$, $m\leq|l|$ without loss of generality.

\section{Asymptotic Expansions and Cartwright-Levinson Theory}
To estimate the asymptotic behaviors of the solution $a_l(r;k)$, $b_l(r;k)$, and then $D_l(k;R)$, we consider the following Liouville transformation:
\begin{eqnarray}\label{21}
&&z_l^1(\xi^1;k):=[n^1(r)]^{\frac{1}{4}}a_l(r;k),\mbox{ where } \xi^1:=\int_0^r[n^1(\rho)]^{\frac{1}{2}}d\rho;\\
&&z_l^2(\xi^2;k):=[n^2(r)]^{\frac{1}{4}}b_l(r;k),\mbox{ where } \xi^2:=\int_0^r[n^2(\rho)]^{\frac{1}{2}}d\rho.\label{22}
\end{eqnarray}
Let us set
\begin{equation}\nonumber
B^j:=\int_0^R[n^j(\rho)]^{\frac{1}{2}}d\rho,\,j=1,2.
\end{equation}

\par
If $k$ is an interior transmission eigenvalue, then
\begin{eqnarray}\label{23}
\left\{%
\begin{array}{ll}
    [z_l^j]''+[k^2-p^j(\xi^j)]z_l^j=0,\,0\leq\xi^j\leq B^j,\,j=1,2;\vspace{5pt}\\
   D_{l}(k;0)=0;\,D_{l}(k;R)=0,
\end{array}%
\right.
\end{eqnarray}
in which
\begin{equation}\label{24}
p^j(\xi^j):=\frac{[n^j]''(r)}{4[n^j(r)]^2}-\frac{5}{16}\frac{\{[n^j]'(r)\}^2}{[n^j(r)]^3}+\frac{l(l+1)}{r^2n^j(r)}.
\end{equation}
Here $\xi^j=B^j$ if and only if $r=R$.
Let
\begin{equation}
q^j(\xi^j):=\frac{[n^j]''(r)}{4[n^j(r)]^2}-\frac{5}{16}\frac{\{[n^j]'(r)\}^2}{[n^j(r)]^3}
+\frac{l(l+1)}{r^2n^j(r)}-\frac{l(l+1)}{[\xi^j]^2}.
\end{equation}
Thus,~(\ref{23}) and~(\ref{24}) imply that
\begin{eqnarray*}
\left\{%
\begin{array}{ll}
    [z_l^j]''+[k^2-q^j(\xi^j)-\frac{l(l+1)}{[\xi^j]^2}]z_l^j=0,\,0\leq\xi^j\leq B^j; \vspace{4pt}\\
    D_{l}(k;0)=0;\,D_{l}(k;R)=0.
\end{array}%
\right.
\end{eqnarray*}
\par
Let us drop the superscripts for simplicity. When $l=0$, the estimates for the solution $z_0(r)$ are classic and can be found in \cite{Po}:
\begin{equation}\label{p11}
z_0(\xi;k)=\frac{\sin k\xi}{k}-\frac{\cos
k\xi}{2k^2}Q(\xi)+\frac{\sin
k\xi}{4k^3}[p(\xi)+p(0)-\frac{1}{2}Q^2(\xi)]+O(\frac{\exp[|\Im
k|\xi]}{k^4}),
\end{equation}
where $Q(\xi):=\int_0^\xi p(s)ds$ and the error term has an improvement \cite[p.\,17]{Po} by
\begin{equation}\label{p12}
\frac{\cos{k\xi}}{8k^4}\{p'(\xi)-p'(0)-[p(\xi)+p(0)]Q(\xi)-\int_0^\xi p^2(s)ds+\frac{1}{6}Q^3(\xi)\}+o(\frac{\exp{|\Im k|\xi}}{|k|^4}).
\end{equation}
Similarly,
\begin{equation}\label{p13}
z'_0(\xi;k)= \cos k\xi+\frac{\sin k\xi}{2k}Q(\xi)+\frac{\cos
k\xi}{4k^2}[p(\xi)-p(0)-\frac{1}{2}Q^2(\xi)]+O(\frac{\exp[|\Im
k|\xi]}{k^3}),
\end{equation}
in which the boundary behavior of $p(\xi)$ plays a role in determining the inverse spectral uniqueness on the scatterer, and thus the $\mathcal{\mathcal{C}}^2$- assumption on the index of refraction is necessary.
\par
For $l\geq-\frac{1}{2}$, we apply the much more generalized results from \cite{Carlson,Carlson2}: Let $z_l(\xi,k)$ be the solution of
\begin{eqnarray}\label{Z}
\left\{\begin{array}{ll}
-z_l''(\xi)+\frac{l(l+1)z_l(\xi)}{\xi^2}+q(\xi)z_l(\xi)=k^2z_l(\xi);\vspace{6pt}\\
\lim_{\xi\rightarrow0}\frac{z_l(\xi)}{\xi^{l+1}}<\infty,
\end{array}
\right.
\end{eqnarray}
in which the function $q(\xi)$ is assumed to be real-valued and square-integrable. We note that the initial condition~(\ref{113}) and~(\ref{1.14}) imply the regularization condition at $\xi=0$ in~(\ref{Z}). The following asymptotics hold \cite[Lemma 3\,p.\,855]{Carlson3}.
\begin{eqnarray}\nonumber
|z_l(\xi;k)-\frac{\sin{k\xi}}{k}|\leq \frac{K\log(1+|k|)}{|k|^2}\exp\{|\Im k|\xi\};
\end{eqnarray}
\begin{eqnarray}\nonumber
|{z_l}'(\xi;k)-\cos{k\xi}|\leq \frac{K\log(1+|k|)}{|k|}\exp\{|\Im k|\xi\},\mbox{ where }K=K(\|q\|).
\end{eqnarray}
This explains the behaviors of solutions $z_l(\xi;k)$ for all $l\geq-\frac{1}{2}$.
\par
For the special case that $q(\xi)\equiv0$, we are actually considering the Bessel's equation:
\begin{eqnarray}
u_l''+[k^2-\frac{l(l+1)}{\xi^2}]u_l=0\label{129}.
\end{eqnarray}
The solutions of~(\ref{129}) essentially are Bessel's functions with a basis of two independent elements.

\par
The variation of parameters formula leads to the following pair of integral equations connecting $z_l(\xi,k)$, $u_l(\xi,k)$:
\begin{eqnarray}\nonumber
z_l(\xi,k)=u_l(\xi,k)-\int_0^\xi G(\xi,t,k)q(t)z_l(t,k,q)dt,
\end{eqnarray}
where
\begin{equation}\nonumber
G(\xi,t,k)=k^{-1}\Phi(kt,k\xi),
\end{equation}
in which
$\Phi(z,\omega)=\phi_1(z)\phi_2(\omega)-\phi_1(\omega)\phi_2(z)$ is defined as in \cite[p.\,6]{Carlson} that satisfies $\Phi(\omega,\omega)=0$ and
\begin{eqnarray}\nonumber
&&\phi_1(x)=(\frac{\pi x}{2})^{\frac{1}{2}}Y_{l+\frac{1}{2}}(x);\\
&&\phi_2(z)=(\frac{\pi z}{2})^{\frac{1}{2}}J_{l+\frac{1}{2}}(z),\nonumber
\end{eqnarray}
where $J_\nu(z)$ is the Bessel function of the first kind and $Y_\nu(x)$ is the Bessel function of the second kind. Similarly, we have the following integral equation.
\begin{eqnarray}\nonumber
z_l'(\xi;k)=u_l'(\xi;k)-\int_0^\xi H(\xi,t,k)q(t)z_l(t,k,q)dt,
\end{eqnarray}
\begin{equation}\nonumber
H(\xi,t,k)=k^{-1}\Psi(kt,k\xi),
\end{equation}
in which
$\Psi(z,\omega)=\phi_1(z)\phi_2'(\omega)-\phi_1'(\omega)\phi_2(z)$ with
is defined as in \cite[p.\,6]{Carlson} which satisfies $\Psi(\omega,\omega)=1$.
A pair of solutions of~(\ref{129}), are given by
\begin{eqnarray}
&&u_1(\xi;k)=k^l\phi_1(k\xi);\\
&&u_2(\xi;k)=k^{-(l+1)}\phi_2(k\xi).\label{210}
\end{eqnarray}
Moreover, we recall that
\begin{equation}
\label{215}
j_l(z)=\sqrt{\frac{\pi}{2z}}J_{l+\frac{1}{2}}(z),
\end{equation}
which we refer to \cite[p.\,437]{Ab}, wherein we find that $J_\nu(z)$ and $Y_\nu(z)$ are holomorphic functions of $z$ and entire functions of the order $\nu$ when $z$ is fixed. \textbf{In this paper, the complex analysis is focused at $\nu$.}

\par
For $\xi>0$ and $\Re k\geq0$, there is a constant $C$ such that
\begin{eqnarray}
&&|u_l(\xi;k)-\frac{\sin\{k\xi-l\frac{\pi}{2}\}}{k^{l+1}}|\leq C|k|^{-(l+1)}\frac{\exp\{|\Im k|\xi\}}{|k\xi|};\label{220}\\
&&|u_l'(\xi;k)-\frac{\cos\{k\xi-l\frac{\pi}{2}\}}{k^{l}}|\leq C|k|^{-l}\frac{\exp\{|\Im k|\xi\}}{|k\xi|}.\label{221}
\end{eqnarray}
Moreover,
\begin{eqnarray}
&&|z_l(\xi;k)-u_l(\xi;k)|\leq C(\frac{\xi}{1+|k\xi|})^{l+1}\exp\{|\Im k|\xi\}E(\xi;k);\label{222}\\
&&|z_l'(\xi;k)-u_l'(\xi;k)|\leq C(\frac{\xi}{1+|k\xi|})^{l}\exp\{|\Im k|\xi\}E(\xi;k),\label{223}
\end{eqnarray}
where
\begin{equation}\label{224}
E(\xi;k)=\exp\{\int_0^\xi\frac{tq(t)}{1+|kt|}dt\}-1.
\end{equation}
We refer these estimates to \cite[Lemma\,2.4, \,Lemma\,3.2]{Carlson2}.
These estimates give the fundamental asymptotic behaviors of $a_l(r;k)$, $a_l'(r;k)$, $b_l(r;k)$ and $b_l'(r;k)$ and, ultimately, the behavior of $D_{l}(k;R)$.
Most important of all, the estimates from~(\ref{220}) to~(\ref{223}) show that they are entire functions of order one and of type $\xi$. We refer details to \cite{Aktosun,Chen,Chen7,Colton2,Mc,Po}. Accordingly their zero sets are described by  Cartwright's theory \cite{Boas,Cartwright,Cartwright2,Levin,Levin2}. In particular, this leads to Weyl's type of asymptotics for the zeros of $D_{l}(k)$ which we describe as follows.
\begin{definition}
Let $f(z)$ be an integral function of order $\rho$, and let
$N(f,\alpha,\beta,r)$ denote the number of the zeros of $f(z)$
inside the angle $[\alpha,\beta]$ and $|z|\leq r$. We define the
density function of the zero set as
\begin{equation}\nonumber
\Delta_f(\alpha,\beta):=\lim_{r\rightarrow\infty}\frac{N(f,\alpha,\beta,r)}{r^{\rho}},
\end{equation}
and
\begin{equation}\nonumber
\Delta_f(\beta):=\Delta_f(\alpha_0,\beta),
\end{equation}
with some fixed $\alpha_0\notin E$ such that $E$ is at most a
countable set \cite{Boas,Koosis,Levin,Levin2}.
\end{definition}
\begin{theorem}\label{2222}
The functional determinant $D_l(k;R)$ is of order one and of type $B^1+B^2$. In particular,
\begin{equation}\nonumber
\Delta_{D_l(k;R)}(-\epsilon,\epsilon)=\frac{B^1+B^2}{\pi}.
\end{equation}
Similarly, $$\Delta_{a_l(k;R)}(-\epsilon,\epsilon)=\frac{B^1}{\pi}, \Delta_{a_l'(k;R)}(-\epsilon,\epsilon)=\frac{B^1}{\pi},$$
$$\Delta_{b_l(k;R)}(-\epsilon,\epsilon)=\frac{B^2}{\pi}, \Delta_{b_l'(k;R)}(-\epsilon,\epsilon)=\frac{B^2}{\pi}.$$
\end{theorem}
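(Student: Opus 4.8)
The plan is to transport the growth information from the Liouville--normalized solutions to the boundary quantities and then assemble the determinant. By the substitutions~(\ref{21})--(\ref{22}), evaluated at $r=R$ (equivalently $\xi^j=B^j$), one has $a_l(R;k)=[n^1(R)]^{-1/4}z_l^1(B^1;k)$, while differentiating in $r$ and using $d\xi^1/dr=[n^1(r)]^{1/2}$ expresses $a_l'(R;k)$ as a $k$-independent linear combination of $z_l^1(B^1;k)$ and $[z_l^1]'(B^1;k)$; the analogous identities hold for $b_l,b_l'$ with $n^2,B^2$. Thus everything reduces to $z_l^j(B^j;k)$ and $[z_l^j]'(B^j;k)$ as functions of $k$. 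The comparison solution $u_l$ is built from $\phi_2(k\xi)=k\xi\,j_l(k\xi)$, which is entire in $k$ of order one and exponential type $\xi$; the estimates~(\ref{220})--(\ref{223}), in which the error $E(\xi;k)\to0$ as $|k|\to\infty$, then show that $z_l^j(B^j;k)$ and $[z_l^j]'(B^j;k)$ are entire of order one and of exponential type exactly $B^j$, with leading behaviour $k^{-(l+1)}\sin(kB^j-l\pi/2)$ and $k^{-l}\cos(kB^j-l\pi/2)$ respectively.

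For the four individual density statements I would appeal to Cartwright's theory. After removing the appropriate monomial factor ($k^{l+1}$ for $a_l$, resp.\ $k^{l}$ for $a_l'$), which changes neither the order nor the limiting zero density, each of $a_l(R;k),a_l'(R;k)$ is, on the real axis, asymptotic to a \emph{nonzero} multiple of $\sin(kB^1-l\pi/2)$ or $\cos(kB^1-l\pi/2)$; hence it is a function of sine type of type $B^1$, lying in the Cartwright class. For such functions the zeros cluster into a horizontal strip about $\mathbb{R}$ and the counting function in the sector $(-\epsilon,\epsilon)$ grows like $(B^1/\pi)r$, so $\Delta_{a_l(k;R)}(-\epsilon,\epsilon)=\Delta_{a_l'(k;R)}(-\epsilon,\epsilon)=B^1/\pi$, and identically $B^2/\pi$ for $b_l,b_l'$. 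The leading coefficients $[n^1(R)]^{\mp1/4}$ never vanish, so there is no degeneracy at this stage.

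It remains to analyze $D_l(k;R)$. By~(\ref{DD}) it equals, up to the factor $1/R$, the Wronskian-type combination $a_l(R;k)b_l'(R;k)-a_l'(R;k)b_l(R;k)$; since the exponential type of a product is the sum of the types, $D_l$ is entire of order one and of type at most $B^1+B^2$. To pin down the type I would insert the leading asymptotics above. Writing $\theta_j:=kB^j-l\pi/2$ and $c:=[n^2(R)/n^1(R)]^{1/4}$, the dominant part is
\[
R\,D_l(k;R)=\frac{k^{-(2l+1)}}{2}\Big[(c-c^{-1})(-1)^l\sin\!\big(k(B^1+B^2)\big)+(c+c^{-1})\sin\!\big(k(B^1-B^2)\big)\Big]+\cdots.
\]
The coefficient of the fastest oscillation $\sin(k(B^1+B^2))$ is $\tfrac{1}{2}(-1)^l(c-c^{-1})$, which is nonzero precisely when $n^1(R)\neq n^2(R)$. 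In that case the indicator in the imaginary directions satisfies $h_{D_l}(\pm\pi/2)=B^1+B^2$, so $D_l$ has type exactly $B^1+B^2$, and Cartwright's theory once more yields $\Delta_{D_l(k;R)}(-\epsilon,\epsilon)=(B^1+B^2)/\pi$.

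The crux --- and the step I expect to be the main obstacle --- is exactly this non-cancellation that keeps the type equal to $B^1+B^2$ rather than letting it drop to $|B^1-B^2|$. When the boundary values coincide the leading coefficient vanishes and one must descend to the subleading terms of~(\ref{p11}) and~(\ref{p13}), whose $\sin(k(B^1+B^2))$ content is governed by the boundary data of the potentials $p^j$ in~(\ref{24}) (through $Q^j(\xi)=\int_0^\xi p^j$ and the values of $p^j,[p^j]'$ at $\xi=B^j$); showing that these do not all cancel is precisely where the $\mathcal{C}^2$-regularity and the boundary behaviour of $n^j$ enter, as anticipated in the remark following~(\ref{p13}). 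A secondary technical point, again handled within the Cartwright class, is the verification of the sine-type lower bounds off the real axis, so that the density theorem --- and not merely the upper bound on the type --- applies to each of $a_l,a_l',b_l,b_l'$ and to $D_l$.
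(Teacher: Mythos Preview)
Your strategy is the same as the paper's: read off the order and type of $a_l,b_l$ and their derivatives from the estimates~(\ref{220})--(\ref{224}) via the Liouville change of variables, invoke Cartwright's theory for the zero densities, and then argue that in the product formula~(\ref{DD}) the two terms $a_lb_l'$ and $a_l'b_l$---each of type $B^1+B^2$---do not cancel. The paper's proof is itself only a sketch: it asserts the non-cancellation under the global hypothesis $n^1\neq n^2$ and defers all detail to \cite{Chen,Chen3,Chen7,Colton7}. Your write-up is considerably more explicit than what the paper actually provides.

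There is, however, a genuine gap at the crux you yourself flag. In the setting of this paper one has $n^1(R)=n^2(R)$ automatically (both indices are $\mathcal{C}^2$ and equal to $1$ outside $\Omega$; the paper uses this explicitly in Section~3). Hence $c=1$ and your leading coefficient $\tfrac12(-1)^l(c-c^{-1})$ vanishes identically, so the displayed asymptotic reduces to $\tfrac12(c+c^{-1})\sin\!\big(k(B^1-B^2)\big)$, which has type $|B^1-B^2|$, not $B^1+B^2$. The claim $h_{D_l}(\pm\pi/2)=B^1+B^2$ therefore cannot be read off at this order. One must push the expansions~(\ref{p11})--(\ref{p13}) further; the first $\sin\!\big(k(B^1+B^2)\big)$ contribution appears only at order $k^{-(2l+3)}$, with a coefficient proportional to $p^1(0)-p^2(0)$ (and at still higher order if that too vanishes). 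You describe this descent in words but do not carry it out, so the type-equals-$B^1+B^2$ conclusion is not established by your argument in the relevant case. The paper does not carry it out either---it simply asserts non-cancellation and cites the references---so your proposal matches the paper's level of detail but shares its reliance on external work for precisely the step you call the main obstacle.
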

\begin{proof}
To sketch an idea of the proof, we note that the growth rate of $D_l(k;R)$ comes either from $a_l(R;k)b_l'(R;k)$ or $a_l'(R;k)b_l(R;k)$ by considering~(\ref{DD}). Either of them have growth rate  $B^1+B^2$ if we examine the estimates from~(\ref{220}) to~(\ref{223}).
These two terms do not cancel each other if $n^1\neq n^2$.
We skip the details and refer it to \cite{Chen,Chen3,Chen7,Colton7}.

\end{proof}
In particular, the spectrum of~(\ref{1.4}) is not empty and discrete in $\mathbb{C}$.

\section{Proof of Theorem \ref{11}}

Let $k$ be an eigenvalue of~(\ref{1.4}).  From the Rellich's expansion~(\ref{15}) and~(\ref{16}),~(\ref{18}) and then~(\ref{114}) hold for \textbf{all} $l\geq0$.
To prove Theorem \ref{11}, we consider an inverse Sturm-Liouville problem to~(\ref{114}) with a common eigenvalue holding for \textbf{all} $l\geq0$.

\par
We recall that $B^j$ and $z_l^j(\xi^j;k)$ are the quantities parametered by the index of refraction $n^j$, $j=1,2$, as in~(\ref{22}) and~(\ref{23}).
Therefore, the estimates~(\ref{220}),~(\ref{221}),~(\ref{222}),~(\ref{223}), and~(\ref{224}) hold for all $l\geq0$ and for both parities of indices of refraction. We have
\begin{equation}\label{31}
|u_l(B^j;k)-\frac{\sin\{kB^j-l\frac{\pi}{2}\}}{{k}^{l+1}}|\leq C|k|^{-(l+1)}\frac{\exp\{|\Im k |B^j\}}{|kB^j|},\,j=1,2,
\end{equation}
and
\begin{equation}\label{32}
|z_l^j(B^j;k)-u_l(B^j;k)|\leq C(\frac{B^j}{1+|kB^j|})^{l+1}\exp\{|\Im k|B^j\}E(B^j;k),
\end{equation}
in which $\exp\{|\Im k|B^j\}$ are bounded in a strip $S$ containing the real axis, and $E(B^j,k)$, $B^j\in\mathbb{R}^+$, is decreasing in $ k$. Firstly we apply~(\ref{31}) to obtain
\begin{equation}\label{33}
u_l(B^j;k)=\frac{\sin\{kB^j-\frac{l\pi}{2}\}}{{k}^{l+1}}[1+O(\frac{1}{|kB^j|})],\, l\in\mathbb{N}_0,
\end{equation}
in which $k$ is not in the zero set of $\sin\{kB^j-\frac{l\pi}{2}\}$ and the function is bounded near the real axis. Moreover, we deduce the following formula from~(\ref{210}) and~(\ref{215}).
\begin{equation}
k^{l+1}u_l(B^j;k)=(\frac{\pi k B^j}{2})^{\frac{1}{2}}J_{l+\frac{1}{2}}(kB^j)=kB^jj_l(kB^j).
\end{equation}
We note that a function of the form $zj_l(z)$ is called a Riccati-Bessel function \cite{Ab}.

\par
Furthermore,
\begin{eqnarray}\label{34}\nonumber
|{k}^{l+1}u_l(B^1;k)-{k}^{l+1}u_l(B^2;k)|&\leq&|{k}^{l+1}u_l(B^1;k)-{k}^{l+1}z_l^1(B^1;k)|\nonumber\vspace{4pt}
\\&&+|{k}^{l+1}z_l^1(B^1;k)-{k}^{l+1}z_l^2(B^2;k)|\nonumber
\\&&\vspace{4pt}+|{k}^{l+1}z_l^2(B^2;k)-{k}^{l+1}u_l(B^2;k)|.
\end{eqnarray}
From~(\ref{32}),
\begin{equation}
|{k}^{l+1}z_l^j(B^j;k)- {k}^{l+1}u_l(B^j;k)|\leq C(\frac{B^jk}{1+|kB^j|})^{l+1}e^{|\Im kB^j|}E(B^j;k)=o(1),\,\mbox{ as }l\rightarrow\infty,
\end{equation}
in which the estimate of $E(B^j;k)$ is given in~(\ref{224}).

\par

Secondly, an eigenvalue $k$ of~(\ref{1.4}) is surely an eigenvalue of~(\ref{114}) and~(\ref{18}). By Liouville's transformation~(\ref{21}),~(\ref{22}) with $n^1(R)=n^2(R)$, we deduce for this $k$ that 
\begin{equation}\label{3.4}
z_l^1(B^1;k)=z_l^2(B^2;k),\,\forall l\in\mathbb{R}^+.
\end{equation}
Assuming $k\geq1$ by Theorem \ref{11} assumption,~(\ref{34}), and~(\ref{3.4}) imply
\begin{equation}\nonumber
|{k}^{l+1}u_l(B^1;k)-{k}^{l+1}u_l(B^2;k)|\rightarrow0,\mbox{ as }l\rightarrow\infty,
\end{equation}
that is,
\begin{equation}\label{35}
|(kB^1)j_l(kB^1)-(kB^1)j_l(kB^2)|\rightarrow0,\mbox{ as }l\rightarrow\infty.
\end{equation}
\begin{lemma}
If~(\ref{35}) holds, then $B^1=B^2$.
\end{lemma}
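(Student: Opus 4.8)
The plan is to read off $B^1$ and $B^2$ from the \emph{leading large-order behaviour} of the Riccati--Bessel values $k^{l+1}u_l(B^j;k)=(kB^j)\,j_l(kB^j)$, in keeping with the standing viewpoint of the paper that the natural complex variable is the order $\nu=l+\tfrac12$ with the argument $z=kB^j$ held fixed. First I would set aside the trigonometric asymptotic~(\ref{33}) (which is a large-$k$, not a large-$l$, expansion) and instead use the head of the power series, i.e.\ the $\nu\to\infty$ asymptotic $J_\nu(z)=\frac{(z/2)^{\nu}}{\Gamma(\nu+1)}\big(1+O(1/\nu)\big)$. Rewritten with $\nu=l+\tfrac12$ and~(\ref{215}), this gives, for fixed $z$,
\[
(kB^j)\,j_l(kB^j)=\frac{(kB^j)^{\,l+1}}{(2l+1)!!}\big(1+O(l^{-1})\big),\qquad j=1,2,\ l\to\infty .
\]

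Second, I would compare the two quantities \emph{multiplicatively} rather than additively, since both are factorially small and their mere difference tends to $0$ for free. Forming the ratio cancels the common factor $(2l+1)!!$ and the common powers of $k$:
\[
\frac{(kB^1)\,j_l(kB^1)}{(kB^2)\,j_l(kB^2)}=\Big(\frac{B^1}{B^2}\Big)^{\,l+1}\big(1+O(l^{-1})\big).
\]
If $B^1\neq B^2$ this ratio is geometrically unbounded (or geometrically null), so the two Riccati--Bessel values live on incomparable scales; only $B^1=B^2$ keeps it convergent to a finite nonzero limit. The mechanism of the proof is then to transport the exact boundary identity $z_l^1(B^1;k)=z_l^2(B^2;k)$ of~(\ref{3.4}) down to the free level: if one can show $z_l^j(B^j;k)=u_l(B^j;k)\big(1+o(1)\big)$ \emph{relatively} as $l\to\infty$, then the left-hand ratio above equals $1+o(1)$, and comparison of the two displays forces $(B^1/B^2)^{l+1}\to1$, whence $B^1=B^2$.

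The step I expect to be the real obstacle is precisely this upgrade from an additive to a relative remainder bound. The main terms decay factorially, like $1/(2l+1)!!$, whereas the estimate available from~(\ref{32})--(\ref{224}), namely $|k^{l+1}(z_l^j-u_l)|\le C\big(\tfrac{kB^j}{1+|kB^j|}\big)^{l+1}E(B^j;k)$, decays only \emph{geometrically} in $l$; measured against the factorial scale this bound blows up, so a naive substitution swamps the very terms I need to track. To repair this I would re-derive the variation-of-parameters estimate of~(\ref{222})--(\ref{223}) in the large-order regime, where the centrifugal term $l(l+1)/\xi^2$ dominates the bounded perturbation $q$: the Green kernel $G(\xi,t,k)=k^{-1}\Phi(kt,k\xi)$ then contributes a factor that is itself factorially small in $l$, so that the relative error $\big(z_l^j-u_l\big)/u_l$ can be shown to be $O(l^{-1})$ uniformly on $[0,B^j]$. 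Once that relative estimate is in hand, the ratio argument closes and the lemma follows. All other ingredients---the asymptotics of $j_l$, the identity~(\ref{3.4}), and the type computation of Theorem~\ref{2222}---are already available in the excerpt.
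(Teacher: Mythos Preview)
Your approach is genuinely different from the paper's. You correctly note that~(\ref{35}), read literally as an additive statement, is vacuous: for any fixed argument $z>0$ one has $z\,j_l(z)\sim z^{l+1}/(2l+1)!!\to 0$ factorially in $l$, so the difference in~(\ref{35}) tends to zero for \emph{every} pair $(B^1,B^2)$. You therefore sensibly retreat to the exact boundary identity~(\ref{3.4}) and argue through the ratio, using the large-order asymptotic of $J_\nu$ to extract the geometric factor $(B^1/B^2)^{l+1}$.

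The paper proceeds along an entirely different line. It invokes the Wronskian identity $j_{n+1}(z)y_{n-1}(z)-j_{n-1}(z)y_{n+1}(z)=(2n+1)z^{-3}$, rewrites it via $y_n(z)=(-1)^{n+1}j_{-n-1}(z)$ as a bilinear relation in $j_{\pm n}$, and uses the half-integer reflection $J_\nu=(-1)^\nu J_{-\nu}$ ($\nu\in\mathbb Z$) to pass from $l\to+\infty$ to $l\to-\infty$ along $\tfrac12+\mathbb Z$. The growing prefactor $(2n+1)$ on the right-hand side is then used to separate $(kB^1)^{-3}$ from $(kB^2)^{-3}$ along a subsequence. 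So the paper's mechanism is a bilinear Wronskian identity coupling positive and negative orders, not the direct power-series ratio you propose; neither argument reduces to the other.

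The obstacle you flag in your own argument is real and is not removed by anything in the excerpt. The bounds~(\ref{222})--(\ref{224}) from~\cite{Carlson2} control $|z_l-u_l|$ only geometrically in $l$, while $|u_l(B^j;k)|$ decays like $1/(2l+1)!!$; dividing, the available bound on the relative error $|z_l-u_l|/|u_l|$ explodes, so the ratio step does not close with the tools at hand. Your proposed repair---re-running the Volterra iteration in the large-$l$ regime so that the Green kernel $G$ inherits the factorial smallness of $u_l$---is plausible (it amounts to showing that a bounded $q$ is a regular perturbation of $l(l+1)/\xi^2$ uniformly in $l$), but it is additional work not contained in the paper and would have to be supplied before your argument is complete.
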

\begin{proof}
From the Wronskian Identity \cite[p.\,439,\,(10.1.32)]{Ab}, we have
$$j_{n+1}(z)y_{n-1}(z)-j_{n-1}(z)y_{n+1}(z)=(2n+1)z^{-3}.$$
Because $y_n(z)=(-1)^{n+1}j_{-n-1}(z),\,n\in\mathbb{Z}$, we obtain that
\begin{equation}\label{311}
(-1)^nj_{n+1}(z)j_{-n}(z)-(-1)^{n}j_{n-1}(z)j_{-n-2}(z)=(2n+1)z^{-3}.
\end{equation}
Moreover,
$$Y_\nu(z)=\frac{J_\nu(z)\cos\{\nu\pi\}-J_{-\nu}(z)}{\sin\{ \nu\pi\}}.$$
Let us restrict $\nu\in\mathbb{Z}$. Thus, $J_{\nu}(z)=(-1)^\nu J_{-\nu}(z)$, and  $j_{\nu-\frac{1}{2}}(z)=(-1)^\nu j_{-\nu-\frac{1}{2}}(z)$ accordingly by~(\ref{215}).
Hence,~(\ref{35}) is deduced to be
\begin{equation}
|(kB^1)j_l(kB^1)-(kB^1)j_l(kB^2)|\rightarrow0,\mbox{ as }l\rightarrow\pm\infty\mbox{ in }\frac{1}{2}+\mathbb{Z}.
\end{equation}
In general, $j_\nu(z)$ is entire in $\nu$ and $|J_{\nu}(\Re z)|\leq1 $ for $\nu\geq0$ \cite[(9.1.60)]{Ab}. There are convergent subsequences for $\{(kB^1)j_l(kB^1)\}_{l\in\frac{1}{2}+\mathbb{Z}}$ and $\{(kB^2)j_l(kB^2)\}_{l\in\frac{1}{2}+\mathbb{Z}}$ with $kB^1$ and $kB^2$ fixed. Hence, the convergence holds for $l\in\mathbb{R}^+$, then for $l\in\mathbb{R}$.
Then, applying~(\ref{311}) for both parities imply that 
\begin{eqnarray*}
(-1)^nj_{n+1}(kB^i)j_{-n}(kB^i)-(-1)^{n}j_{n-1}(kB^i)j_{-n-2}(kB^i)=(2n+1)(kB^i)^{-3},\,i=1,2,
\end{eqnarray*}
that is, $|\frac{2l_m+1}{kB^1}-\frac{2l_m+1}{kB^2}|\rightarrow0$, as $m\rightarrow\infty$ for some subsequence $\{l_m\}\in\mathbb{N}$. This is not possible unless $B^1=B^2$.

\end{proof}
Applying this lemma, now we want to show that $z^1_0(B^1;k)\equiv z^2_0(B^1;k)$ in $\mathbb{C}$. In general,~(\ref{32}) implies that
\begin{eqnarray}\label{3.11}
z_l^j(B^1;k)=u_l(B^1;k)+O[(\frac{B^1}{1+|kB^1|})^{l+1}\exp\{|\Im k B^1|\}E(B^1;k)],\,j=1,2.
\end{eqnarray}
We apply~(\ref{33}) to~(\ref{3.11}), and obtain
\begin{eqnarray}
z_l^j(B^1;k)=\frac{\sin\{k B^1-\frac{l\pi}{2}\}}{k^{l+1}}[1+O(\frac{1}{k})],\,j=1,2.
\end{eqnarray}
Thus, we deduce the following asymptotic behavior of the quotient.
\begin{equation}\label{3.12}
\frac{z_l^1(B^1;k)}{z_l^2(B^1;k)}=\frac{1+O(\frac{1}{k})}{1+O(\frac{1}{k})}=1+O(\frac{1}{k}),\,k\in\mathbb{C},
\end{equation}
outside some neighborhoods of $\frac{\sin\{k B^1-\frac{l\pi}{2}\}}{k^{l+1}}$. Particularly, for $l=0$, we apply the asymptotics~(\ref{p11}) to~(\ref{3.12}). Let $\{\gamma_j\}_{j=1}^\infty$ be the simple zeros of $\frac{\sin\{k B^1\}}{k}$ and $\{\Gamma_j\}_{j=1}^\infty$ be some sequences of neighborhoods containing $\{\gamma_j\}_{j=1}^\infty$. We require that $|\Gamma_j|\rightarrow0$, as $j\rightarrow\infty.$

\par
For $k\notin\{\Gamma_j\}_{j=1}^\infty$, we note from~(\ref{3.12}) that
\begin{equation}\label{3.13}
\lim_{k\rightarrow\infty}\frac{z_0^1(B^1;k)}{z_0^2(B^1;k)}=1.
\end{equation}
For $k\in\{\Gamma_j\}_{j=1}^\infty$, we consider the following quotient.
\begin{equation}\label{3.14}
\lim_{j\rightarrow\infty}\lim_{k\rightarrow \gamma_j}\frac{z_0^1(B^1;k)}{z_0^2(B^1;k)}=\lim_{j\rightarrow\infty,\,|\Gamma_j|\neq0}\frac{\lim_{k\rightarrow k_j}\frac{z_0^1(B^1;k)}{\sin\{k B^1\}/k}}{\lim_{k\rightarrow k_j}\frac{z_0^2(B^1;k)}{\sin\{k B^1\}/k}}.
\end{equation}
From~(\ref{p11}) and~(\ref{p12}), we have the following power series.
\begin{equation}\label{3.15}
\frac{z_0^1(B^1;k)}{\frac{\sin\{kB^1\}}{k}}=1-\frac{\cot k B^1}{2k}Q(B^1)+\frac{[p(B^1)+p(0)-\frac{1}{2}Q^2(B^1)]}{4k^2}+O(\frac{1}{k^3}).
\end{equation}
Moreover, by complex analysis, we have
\begin{equation}\nonumber
\frac{z_0^1(B^1;k)}{\frac{\sin\{kB^1\}}{k}}
=\frac{{\rm Res}\{z_0^1(B^1;k)/\frac{\sin\{kB^1\}}{k};\gamma_j\}}{(k-\gamma_j)}+{\mbox{ higher order terms }},\, \mbox{ for }k\mbox{ near }\gamma_j.
\end{equation}
Now~(\ref{3.15}) implies that
\begin{eqnarray*}
&&{\rm Res}\{z_0^1(B^1;k)/\frac{\sin\{kB^1\}}{k};\gamma_j\}\\
&=&{\rm Res }\{1-\frac{\cot k B^1}{2k}Q^1(B^1)-\frac{1}{8k^2}[Q^1(B^1)]^2+\ldots;\gamma_j\}\\&=&
-\frac{Q^1(B^1)}{2\gamma_j}\cos\{\gamma_jB^1\}\times {\rm Res}\{\frac{1}{\sin\{kB^1\}};\gamma_j\}+O(\frac{1}{\gamma_j^2}).
\end{eqnarray*}
Note that  ${\rm Res}\{\frac{1}{\sin\{kB^1\}};\gamma_j\}=(-1)^j/B^1$.
Hence,
\begin{equation}\label{3.16}
{\rm Res}\{z_0^1(B^1;k)/\frac{\sin\{kB^1\}}{k};\gamma_j\}=-\frac{Q^1(B^1)}{2B^1\gamma_j}+O(\frac{1}{\gamma_j^2}).
\end{equation}
Let us examine $Q^1(B^1)$. By integration by parts, we deduce that
$$Q^1(B^1)=\int_0^{B^1}\frac{1}{4}\frac{{n^1}''}{{n^1}^2}-\frac{5}{16}\frac{{{n^1}'}^2}{{n^1}^3}ds
=\frac{3}{16}\int_0^{B^1}\frac{{{n^1}'}^2}{{n^1}^3}ds\geq0,$$
in which we emphasize that ${n^1}'(\xi^1=B^1)={n^1}'(r=R)=0$ by the fact that $n^1\in\mathcal{C}^2(\mathbb{R}^3)$ and is constant outside $\Omega$. Here, $Q^1(B^1)=0$ if and only if $n^1\equiv1$. The same analysis holds for the index $n^2$ as well.
Assuming $n^1$ and $n^2$  are non-constant as the assumption of Theorem \ref{11},~(\ref{3.14}) and~(\ref{3.16}), with non-vanishing second coefficient $Q^1(B^1)$ and $Q^2(B^1)$, imply that
\begin{equation}\label{3.18}
\lim_{j\rightarrow\infty}\lim_
{k\rightarrow \gamma_j}\frac{z_0^1(B^1;k)}{z_0^2(B^1;k)}=1.
\end{equation}
Hence,~(\ref{3.13}) and~(\ref{3.18}) imply that
\begin{equation}\label{3.21}
\lim_{k\rightarrow \infty}\frac{z_0^1(B^1;k)}{z_0^2(B^1;k)}=1.
\end{equation}
This implies that $\frac{z_0^1(B^1;k)}{z_0^2(B^1;k)}$ has only finite number of irreducible zeros, denoted as $\{z^1_1,z^1_2,\ldots,z^1_M\}$, or poles, $\{z^2_1,z^2_2,\ldots,z^2_M\}$, in $0i+\mathbb{R}$, in which we deduce from~(\ref{3.21}) that the numbers of the irreducible zeros and poles are equal to some $M\in\mathbb{N}_0$. Let $$F(k):=\frac{z_0^1(B^1;k)}{z_0^2(B^1;k)}.$$
Therefore,
\begin{equation}\label{FF}
F(k)=\frac{(k-z^1_1)(k-z^1_2)\cdots(k-z^1_M)}{(k-z^2_1)(k-z^2_2)\cdots(k-z^2_M)}.
\end{equation}
We note that whenever $\tilde{k}$ is an interior transmission eigenvalue of~(\ref{114}), then it satisfies $z_0^1(B^1;\tilde{k})=z_0^2(B^1;\tilde{k})$ by~(\ref{18}),~(\ref{Q}),~(\ref{21}), and~(\ref{22}).
That is $$F(\tilde{k})=1.$$
Theorem \ref{2222} implies that $$\frac{B^1+B^2}{\pi}>\frac{B^1}{\pi},\,\frac{B^1+B^2}{\pi}>\frac{B^2}{\pi}, $$
and then there is a higher density of zeros of interior transmission eigenvalues than the Dirichlet eigenvalues or Neumann eigenvalues of~(\ref{Z}).
Hence, we deduce from the Fundamental Theorem of Algebra to~(\ref{FF}) that
\begin{equation}\nonumber
F(k)\equiv1.
\end{equation}
In particular, $z_0^1(\xi;k)$ and $z_0^2(\xi;k)$ have the same Dirichlet eigenvalues.
\par
Similarly, we can prove that $[z_0^1]'(\xi;k)$ and $[z_0^2]'(\xi;k)$ have the same Neumann eigenvalues by considering~(\ref{p13})
and
\begin{equation}\nonumber
\frac{[z_0^1]'(B^1;k)}{[z_0^2]'(B^1;k)}=1+O(\frac{1}{k}),\,k\in\mathbb{C},
\end{equation}
outside some neighborhoods of the zeros of $\cos\{k B^1\}$. If $n^1$ and $n^2$ have the same set of Dirichet and Neumann eigenvalues, then  the inverse uniqueness result of the Bessel operator \cite[Theorem\,1.2,\,Theorem\,1.3]{Carlson2} implies that $n^1\equiv n^2$. This proves Theorem \ref{11}.

$\square$

\begin{remark}
It is believed that a stepwise potential function in general  can not be recovered by one discrete spectrum. See the Livshits' example on potentials in this class \cite{Zw}. In this paper,  a $\mathcal{C}^2$-index of refraction defines a series of ODE from far-fields to the origin.
\end{remark}



\end{document}